\pdfoutput=1
\documentclass[11pt]{article}

\usepackage[utf8]{inputenc}
\DeclareUnicodeCharacter{211D}{\ensuremath{\mathbb{R}}}   
\DeclareUnicodeCharacter{03C6}{\ensuremath{\varphi}}      
\DeclareUnicodeCharacter{00A7}{\S}
\DeclareUnicodeCharacter{2014}{\textemdash}
\DeclareUnicodeCharacter{2265}{\ensuremath{\geq}}        
\DeclareUnicodeCharacter{2264}{\ensuremath{\leq}}        
\DeclareUnicodeCharacter{2192}{\ensuremath{\to}}         
\DeclareUnicodeCharacter{2200}{\ensuremath{\forall}}     
\DeclareUnicodeCharacter{2203}{\ensuremath{\exists}}     
\DeclareUnicodeCharacter{2208}{\ensuremath{\in}}         
\DeclareUnicodeCharacter{2227}{\ensuremath{\wedge}}      
\DeclareUnicodeCharacter{222B}{\ensuremath{\int}}        
\DeclareUnicodeCharacter{2202}{\ensuremath{\partial}}    
\DeclareUnicodeCharacter{222A}{\ensuremath{\cup}}        
\DeclareUnicodeCharacter{03BC}{\ensuremath{\mu}}         
\DeclareUnicodeCharacter{03C9}{\ensuremath{\omega}}      
\DeclareUnicodeCharacter{03A9}{\ensuremath{\Omega}}      
\DeclareUnicodeCharacter{2016}{\ensuremath{\Vert}}       
\DeclareUnicodeCharacter{00D7}{\ensuremath{\times}}      
\DeclareUnicodeCharacter{02E2}{\ensuremath{^{s}}}        
\DeclareUnicodeCharacter{1D50}{\ensuremath{^{m}}}        
\DeclareUnicodeCharacter{2080}{\ensuremath{_0}}          
\DeclareUnicodeCharacter{2081}{\ensuremath{_1}}          
\DeclareUnicodeCharacter{2082}{\ensuremath{_2}}          
\usepackage[T1]{fontenc}
\usepackage[sc]{mathpazo}         
\usepackage[a4paper,margin=1in]{geometry}
\usepackage{amsmath}
\usepackage{amssymb}
\usepackage{amsthm}
\usepackage{booktabs}
\usepackage{array}
\usepackage{microtype}
\usepackage{xcolor}
\definecolor{leanbg}{HTML}{F7F5F0}
\definecolor{rulecol}{HTML}{8A7A5C}
\usepackage[most]{tcolorbox}
\usepackage{titlesec}
\titleformat{\section}{\large\scshape}{\thesection}{0.8em}{}
\titleformat{\subsection}{\normalsize\scshape}{\thesubsection}{0.7em}{}
\usepackage{alltt}
\usepackage{tikz}
\usetikzlibrary{arrows.meta}
\usepackage[hidelinks]{hyperref}

\newcommand{\lean}[1]{\texttt{\def\_{\textunderscore\allowbreak}#1}}
\newcommand{\leanw}[1]{\texttt{#1}}
\newcommand{\E}{\mathbb{E}}
\newcommand{\R}{\mathbb{R}}
\newcommand{\F}{\mathcal{F}}
\newcommand{\ItoI}{\mathcal{I}}

\newtcolorbox{leanbox}{enhanced, sharp corners, boxrule=0.5pt,
  colback=leanbg, colframe=rulecol, left=8pt, right=8pt, top=6pt, bottom=6pt}
\newcounter{leanlst}
\newenvironment{leanlisting}[2]%
  {\par\medskip\refstepcounter{leanlst}\label{#2}%
   \noindent{\small\itshape Listing \theleanlst.\ #1}\par\nobreak\smallskip
   \begin{leanbox}\footnotesize\begin{alltt}}
  {\end{alltt}\end{leanbox}\par\medskip\noindent\ignorespaces}

\newtheoremstyle{refined}{\topsep}{\topsep}{\normalfont}{}{\scshape}{.}{.6em}{}
\theoremstyle{refined}
\newtheorem{theorem}{Theorem}[section]

\theoremstyle{remark}

\title{A Machine-Checked It\^o Calculus for Brownian Motion}
\author{Raphael Coelho\thanks{Independent researcher.
  ORCID: \href{https://orcid.org/0009-0001-6601-1023}{0009-0001-6601-1023}.
  Correspondence: \texttt{raphaelrrcoelho@gmail.com}.
  Artifact: \url{https://github.com/raphaelrrcoelho/formal-mathfin}.}}
\date{June 2026}

\begin{document}
\maketitle

\begin{abstract}
We develop the It\^o calculus of Brownian motion, machine-checked in Lean~4
over Mathlib and the \lean{BrownianMotion} package. On a bounded interval
$[0,T]$ the It\^o integral is built as a Hilbert-space isometry, from a
predictable-rectangle $\pi$-system through the density of simple adapted
processes. Realized as a process, it is a continuous $L^2$ martingale. One
structural identity drives this: the integral at time $t$ is the
conditional-expectation projection of its terminal value onto $\F_t$, and
from it adaptedness, the martingale property, the contraction bound, and both
the terminal and time-indexed It\^o isometries follow as corollaries. On this
integral we prove It\^o's formula for $C^3$ functions with bounded
derivatives, including the time-dependent form
$df = f_x\,dB + (f_t + \tfrac12 f_{xx})\,dt$, by a discrete-to-continuous
argument through weighted quadratic variation with explicit $L^2$ remainder
bounds. We then pass from the $L^2$ theory to the pathwise. The integral
process has an almost-surely continuous modification, and its
everywhere-continuous representative is a local martingale for the
null-augmented Brownian filtration; gluing the bounded-horizon
representatives along the half-line yields the It\^o integral as a continuous
local martingale on all of $\R_{\ge 0}$, the form it takes in the classical
theory. To our knowledge these are the first machine-checked constructions of
the It\^o integral and of It\^o's formula in any proof assistant, and the
first to reach a pathwise-continuous local martingale. The boundary is
explicit. The $L^2$ integral and It\^o's formula are developed on $[0,T]$
with bounded-derivative integrands; the unrestricted $C^2$ formula,
integrators beyond Brownian motion, and right-continuity of the filtration
lie outside the development. It is roughly 7{,}900 lines of Lean across 26
modules, every theorem \lean{sorry}-free, the axioms of each headline result
pinned to Mathlib's classical defaults by a build-enforced gate, and the
whole reproducible from a pinned toolchain.
\end{abstract}

\section{Introduction}\label{sec:intro}

Stochastic calculus sits oddly in the formalization landscape. Its
prerequisites (measure-theoretic probability, conditional expectation,
$L^p$ spaces, Gaussian measures) are by now well represented in the
major proof-assistant libraries. Its consumers (mathematical finance,
stochastic control, SPDEs) are fields with strong correctness cultures
that regularly cite formal methods as an aspiration. Yet the calculus
itself has remained unformalized. Martingales in discrete and continuous
time were formalized in Isabelle/HOL by Keskin~\cite{keskin2023}; the
central limit theorem was machine-checked a decade ago~\cite{avigad2017};
and in late 2025 Degenne, Ledvinka, Marion, and Pfaffelhuber completed the
first formal construction of Brownian motion itself, in
Lean~4~\cite{brownianMotionRepo,degenne2025}. Their construction is the
foundation the present work builds on; the It\^o integral, It\^o's formula,
and the martingale theory of the integral are the next layer up. As of this
writing we are not aware of any machine-checked proof of It\^o's formula, or
any machine-checked construction of the It\^o integral, in any proof
assistant.

This paper describes such a development. It is the stochastic-calculus
core of a larger formalized mathematical-finance library announced
in~\cite{flagship}; where the announcement surveys breadth, the present
paper is a detailed treatment of its stochastic-calculus core, for readers
who want to know exactly what was proved, under which hypotheses, and by
which proof architecture.

\subsection*{Contributions}

All results are stated over a probability space $(\Omega,\F,\mu)$ carrying
a process $B : \R_{\ge 0} \to \Omega \to \R$ satisfying the
\lean{BrownianMotion} package's pre-Brownian interface
(\lean{IsPreBrownian}): centered Gaussian increments with variance
$t-s$, measurable coordinates, and, where stated, continuous paths.
$\F_t$ denotes the natural filtration (\lean{natFiltration}).

\begin{enumerate}
\item \textbf{The It\^o integral as an isometry}
  (Section~\ref{sec:integral}). The integrand space is the $L^2$ space of
  a product measure on $(0,T]\times\Omega$ trimmed to the predictable
  $\sigma$-algebra generated by rectangles $(s,t]\times A$, $A \in \F_s$.
  Simple adapted processes are dense; the elementary integral is an
  isometry on them; the It\^o integral $\ItoI_T = $
  \lean{itoIntegralCLM\_T} is its unique continuous-linear isometric
  extension.
\item \textbf{The integral as a process}
  (Section~\ref{sec:process}). For each $t \le T$ a continuous linear map
  realizes $(\varphi \bullet B)_t = \int_0^t \varphi\,dB$, and the
  development's load-bearing identity states
  $(\varphi \bullet B)_t = \E[\,\ItoI_T(\varphi) \mid \F_t\,]$ as an
  identity of $L^2$ elements. Adaptedness, the martingale property, the
  contraction bound, the terminal isometry, the time-indexed It\^o
  isometry $\E[(\varphi\bullet B)_t^2] = \int_0^t \E[\varphi_s^2]\,ds$, and
  $L^2$-continuity in $t$ are corollaries.
\item \textbf{It\^o's formula} (Section~\ref{sec:formula}). For $f \in
  C^3(\R)$ with $|f'| \le C_1$, $|f''| \le C_2$, $|f'''| \le C_3$,
  \[
    f(B_T) - f(B_0) \;=\; \ItoI_T\big(f'(B)\big) \;+\; \tfrac12
    \int_0^T f''(B_s)\,ds \qquad \mu\text{-a.e.},
  \]
  and the time-dependent form
  $f(T, B_T) - f(0, B_0) = \ItoI_T(f_x(\cdot, B)) + \int_0^T (f_t +
  \tfrac12 f_{xx})(s, B_s)\,ds$ under a $C^{1,2}$-with-bounds hypothesis
  package.
\item \textbf{The integral as a pathwise local martingale}
  (Section~\ref{sec:pathwise}). The process has an almost-surely continuous
  modification on $[0,T]$
  (\lean{exists\_continuous\_modification\_itoProcess}); its
  everywhere-continuous representative is a local martingale for the
  null-augmented Brownian filtration
  (\lean{exists\_continuous\_localMartingale\_modification}); and gluing the
  bounded-horizon representatives along the half-line gives the It\^o
  integral as a continuous local martingale on all of $\R_{\ge 0}$
  (\lean{exists\_continuous\_localMartingale\_modification\_infinite}).
\item \textbf{Supporting theory}. Quadratic variation of Brownian motion
  in $L^1$ and $L^2$ forms under hypotheses strictly weaker than the
  textbook statement; the expectation-level It\^o identity $\E[f(B_t)] =
  f(0) + \tfrac12 \int_0^t \E[f''(B_s)]\,ds$; and the closed-form keystone
  $\ItoI_T(B) = \tfrac12(B_T^2 - B_0^2 - T)$, the formal counterpart of
  the classical $\int_0^T B\,dB$.
\end{enumerate}

Section~\ref{sec:scope} sets the boundary precisely: which parts of the
classical theory are present, which are absent, and why the title says
``a'' calculus rather than ``the'' calculus.

\section{Setting and verification discipline}\label{sec:setting}

\subsection{Foundations consumed}

The development is built in Lean~4~\cite{lean4} over
Mathlib~\cite{mathlib} and the \lean{BrownianMotion}
package~\cite{brownianMotionRepo,degenne2025}, whose construction of
Brownian motion (via Kolmogorov--Chentsov, Gaussian measures on Banach
spaces, and a Carath\'eodory/Kolmogorov extension layer) this work consumes
rather than reproves; the artifact pins the package at commit
\leanw{eaa4391}. Time is indexed by $\R_{\ge 0}$ (Lean's \lean{NNReal});
Lebesgue measure on time enters through a dedicated \lean{timeMeasure}
restricted to $(0,T]$.

\subsection{Auditability and reproducibility}

Verification discipline follows the library's standing
contract~\cite{flagship}. No \lean{sorry} or \lean{admit} occurs anywhere
in the development. A build-enforced axiom audit pins the axioms of every
headline constant to Mathlib's classical defaults (\lean{propext},
\lean{Classical.choice}, \lean{Quot.sound}), and a continuous-integration
gate fails the build if a new proof introduces any other axiom. A hash
ledger records the exact source inputs each verified statement depends on,
so that a result cannot become stale without detection when an upstream
dependency changes. The
statements displayed below are transcriptions of the Lean statements; where
a hypothesis package is abbreviated in prose, the displayed Lean name is
the authority, and Appendix~\ref{app:names} maps every displayed result to
its module and declaration name.

\subsection{Scale}\label{sec:stats}

The It\^o and quadratic-variation tower is roughly $7{,}900$ lines of Lean
across $26$ modules, comprising on the order of $250$ theorems and lemmas
and $40$ definitions, all downstream of Mathlib and the
\lean{BrownianMotion} package and upstream of the parent library's pricing
layers. Table~\ref{tab:modules} groups the modules by role. These figures
are reported from the working tree of June~2026 and are refreshed against a
clean build before submission.

\begin{table}[t]
\centering
\small
\begin{tabular}{@{}lp{8.4cm}@{}}
\toprule
Role & Modules (abbreviated) \\
\midrule
Discrete engine &
  \lean{DiscreteIto}, \lean{DiscreteItoPolynomial},
  \lean{ItoSquaringIdentity} \\
Quadratic variation &
  \lean{QuadraticVariationL2}, \lean{WeightedQuadraticVariation},
  \lean{BrownianQuadraticVariation} \\
Integral (isometry) &
  \lean{ItoIntegralL2}, \lean{ItoIntegralCLM},
  \lean{ItoIsometryAdapted}, \lean{ItoIntegralBrownian} \\
Riemann bridge &
  \lean{ItoIntegralRiemannBridge}, \lean{ItoIntegralRiemannBridgeTD} \\
Integral as process &
  \lean{ItoIntegralProcess}, \lean{ItoIntegralProcessGeneral},
  \lean{ItoIntegralProcessMartingale},
  \lean{ItoIntegralProcessIsometry} \\
It\^o's formula &
  \lean{ItoFormulaC2}, \lean{ItoFormulaCLM},
  \lean{ItoFormulaRemainder}, \lean{ItoFormulaSquaredL2},
  \lean{ItoFormulaTD}, \lean{ItoFormulaTDRemainder} \\
Pathwise regularity &
  \lean{ItoIntegralProcessContinuousModification},
  \lean{ItoIntegralProcessLocalMartingaleGeneral},
  \lean{ItoIntegralProcessL2Infinite},
  \lean{ItoIntegralProcessLocalMartingaleInfinite} \\
\bottomrule
\end{tabular}
\caption{The development by role. Roughly $7{,}900$ lines total.}
\label{tab:modules}
\end{table}

\section{The classical theory}\label{sec:classical}

We recall the objects the formalization reproduces, in the form they take in
the standard references~\cite{karatzas1991,ito1944}, so that the theorems of
Sections~\ref{sec:integral} to~\ref{sec:formula} can be read against them.

Let $(B_t)_{t \ge 0}$ be a Brownian motion on $(\Omega,\F,\mu)$ with natural
filtration $(\F_t)$. For a simple adapted process $\varphi_s = \sum_k \xi_k
\mathbf{1}_{(t_k, t_{k+1}]}(s)$, with $\xi_k$ bounded and
$\F_{t_k}$-measurable, the It\^o integral is the finite sum $\int_0^T
\varphi\,dB = \sum_k \xi_k (B_{t_{k+1}} - B_{t_k})$. The defining computation
is the \emph{It\^o isometry}: because each $\xi_k$ is independent of the
forward increment $B_{t_{k+1}} - B_{t_k}$, and increments over disjoint
intervals are independent,
\[
  \E\Big[\Big(\textstyle\sum_k \xi_k (B_{t_{k+1}} - B_{t_k})\Big)^2\Big]
  = \sum_k \E[\xi_k^2]\,(t_{k+1} - t_k) = \E\!\int_0^T \varphi_s^2 \, ds,
\]
the cross terms vanishing by adaptedness and the diagonal contributing
$\E[\xi_k^2]\,(t_{k+1}-t_k)$ because $\E[(B_{t_{k+1}}-B_{t_k})^2 \mid \F_{t_k}]
= t_{k+1}-t_k$. The map $\varphi \mapsto \int_0^T \varphi\,dB$ is thus an
isometry from the simple processes, carrying the $L^2(\lambda \otimes \mu)$
norm on the predictable side, into $L^2(\mu)$; as the simple processes are
dense in the predictable $L^2$ space, it extends to a linear isometry on all
of it. That construction is Section~\ref{sec:integral}.

As a function of its upper limit, $t \mapsto \int_0^t \varphi\,dB$ is a
continuous $L^2$ martingale, with $\E\big[(\int_0^t \varphi\,dB)^2\big] =
\E\int_0^t \varphi_s^2\,ds$ at every $t$; that is Section~\ref{sec:process}.
The quadratic variation of Brownian motion is $[B]_t = t$: along partitions of
$[0,t]$ with mesh tending to zero, $\sum_k (B_{t_{k+1}} - B_{t_k})^2 \to t$,
in $L^2$ for the uniform partition. It\^o's formula states that for $f \in
C^2$,
\[
  f(B_t) - f(B_0) = \int_0^t f'(B_s)\,dB_s + \tfrac12 \int_0^t f''(B_s)\,ds,
\]
the second-order term surviving precisely because $(\mathrm dB)^2 = \mathrm dt$
in the mean-square sense made exact by $[B]_t = t$. Section~\ref{sec:formula}
formalizes this for $f$ with bounded derivatives, and its time-dependent
extension.

\section{Design of the formalization}\label{sec:design}

Three representation choices shape the development and are worth stating before
the theorems that rest on them.

\paragraph{The integrand, after the upstream package.} The simple integrands
are not re-encoded here. The \lean{BrownianMotion} package provides
\lean{SimpleProcess}, an integrand built as a finite linear combination
(\lean{Finsupp.linearCombination}) of indicators of elementary predictable
sets, together with the lemma identifying the $\sigma$-algebra those indicators
generate with the predictable one. This work consumes both: the elementary
It\^o integral is defined on \lean{SimpleProcess}, and the density argument of
Section~\ref{sec:integral} rests on that generation lemma. What is built here
is the layer above, namely the rectangle $\pi$-system, the
orthogonal-complement density, and the isometric extension.

\paragraph{Time as \lean{NNReal}.} The time axis is $\R_{\ge 0}$, not $\R$ with
a positivity side-condition. Increments $B_t - B_s$ then carry $s \le t$ in
their type-level data rather than as a hypothesis to discharge, and the bounded
horizon enters uniformly through a single finite measure (\lean{timeMeasure\_T},
of total mass $T$) rather than through interval hypotheses threaded across
lemmas.

\paragraph{One construction pattern, used twice.} The integral and the
integral-as-a-process are built by the same idiom: a norm-preserving linear map
on a dense subspace, then its unique continuous-linear extension (Mathlib's
\lean{extendOfNorm}). For the integral the dense subspace is the simple
processes inside $H_T$; for the process it is the same simple processes, the
extension now landing in $L^2(\mu)$ at each $t$. Because the process is an
extension of the simple-layer map, every identity proved on simple processes
(the martingale property, the per-time isometry) transfers to all of $H_T$ by
continuity and density. That mechanism is what makes Section~\ref{sec:process}
a sequence of short arguments.

\section{The $L^2$ It\^o integral}\label{sec:integral}

\subsection{The integrand space}

Fix $T \in \R_{\ge 0}$ and write $\lambda_T$ for Lebesgue measure on time
restricted to $(0,T]$ (\lean{timeMeasure\_T}). On $(0,T] \times \Omega$ the
predictable rectangles
\[
  \mathcal{R} \;=\; \big\{ (s,t] \times A \;:\; s \le t,\; A \in \F_s
  \big\} \;\cup\; \big\{ \{0\}\times A_0 : A_0 \in \F_0 \big\}
\]
form a $\pi$-system, encoded as \lean{predictableRect}
(Listing~\ref{lst:rect}); the $\sigma$-algebra they generate is the
(restricted) predictable $\sigma$-algebra, identified by
\lean{generateFrom\_predictableRect}. The integrand space is
\[
  H_T \;=\; L^2\big((0,T]\times\Omega,\; \mathcal{P}_T,\;
  (\lambda_T \otimes \mu)\!\restriction_{\mathcal{P}_T}\big),
\]
realized in Lean as the $L^2$ space of the product measure \emph{trimmed}
to the predictable $\sigma$-algebra (\lean{trimMeasure\_T}); the It\^o
integral is then a continuous linear map out of this space
(Listing~\ref{lst:clm}).

\begin{leanlisting}{The predictable-rectangle $\pi$-system (abbreviated).}{lst:rect}
def predictableRect (hBmeas : ∀ t, Measurable (B t)) :
    Set (Set (ℝ≥0 × Ω)) :=
  -- the {0} × F₀ piece
  {S | ∃ F₀, MeasurableSet[natFiltration hBmeas 0] F₀ ∧ S = {0} ×ˢ F₀}
  ∪ -- the (a,b] × F piece, F measurable at time a
  {S | ∃ a b F, a ≤ b ∧ MeasurableSet[natFiltration hBmeas a] F ∧ ... }
\end{leanlisting}

\begin{leanlisting}{The It\^o integral as a continuous linear map out of the trimmed-predictable $L^2$ space.}{lst:clm}
noncomputable def itoIntegralCLM_T (T : ℝ≥0) (hBmeas : ∀ t, Measurable (B t)) :
    Lp ℝ 2 (trimMeasure_T T hBmeas) →L[ℝ] Lp ℝ 2 μ
\end{leanlisting}

\paragraph{Why a trimmed measure.} The one definitional choice that pays
off repeatedly is to make $H_T$ the $L^2$ space of a measure that has
\emph{already} been restricted to the predictable $\sigma$-algebra, rather
than carrying a sub-$\sigma$-algebra measurability side-condition alongside
an ambient $L^2$ space. The consequence is that predictability is a
property of \emph{membership in $H_T$}, not a separate hypothesis to be
threaded through every limit: a Cauchy sequence in $H_T$ has its limit in
$H_T$, so when It\^o's formula realizes $s \mapsto f'(B_s)$ as an $L^2$
limit of step processes (Section~\ref{sec:formula}), the limit's
predictability is free. The cost is a one-time investment in the
trimmed-measure $L^2$ API; the saving recurs at every density and
convergence argument downstream.

\subsection{Simple processes, density, isometry}

A simple adapted process is a finite combination of indicator rectangles
$\xi_s \cdot \mathbf{1}_{(s,t]}$ with $\xi_s$ bounded and
$\F_s$-measurable; \lean{TBoundedSP} packages the class and
\lean{simpleProcessL2\_T} sends it into $H_T$. The elementary integral maps
such a process to $\sum \xi_s\,(B_t - B_s)$ (\lean{itoAssembly\_T});
independence and the Gaussian increment moments give the elementary It\^o
isometry (\lean{assembly\_isometry\_T}),
$\big\|\text{(elementary integral of } V)\big\|_{L^2(\mu)} = \|V\|_{H_T}$.

Density of the simple processes in $H_T$
(\lean{simpleAssembly\_T\_denseRange}) is the technical core of the
construction, and the one part of it not inherited from the upstream package.
It is proved by the orthogonal-complement method. The basic predictable
rectangles $(a,b] \times F$ with $F \in \F_a$ form a $\pi$-system
(\lean{isPiSystem\_predictableRect}) that generates the predictable
$\sigma$-algebra; the generation lemma itself
(\lean{ElementaryPredictableSet.generateFrom\_eq\_predictable}) is consumed
from \lean{BrownianMotion}. Take $\psi \in H_T$ orthogonal to every elementary
integral. Testing against the indicator of a rectangle $(a,b] \times F$ shows
that the integral of $\psi$ over that rectangle vanishes; the collection of
measurable sets on which $\int \psi = 0$ is a $\lambda$-system containing the
rectangle $\pi$-system, so Dynkin's $\pi$--$\lambda$ theorem extends the
vanishing to the entire predictable $\sigma$-algebra
(\lean{setIntegral\_eq\_zero\_of\_orthogonal\_pred}), forcing $\psi = 0$ a.e.
The It\^o integral $\ItoI_T = $ \lean{itoIntegralCLM\_T} is then the unique
isometric continuous-linear extension of the elementary integral, via
Mathlib's \lean{LinearIsometry}/\lean{extendOfNorm} idiom for extending a
norm-preserving map off a dense subspace.

\subsection{The keystone example}

A Riemann-sum bridge (\lean{riemannφ}, \lean{stepφ}, with cell-collapse
lemmas) connects $\ItoI_T$ applied to time-discretized integrands with
classical partial sums. Its first consumer is the closed form
\[
  \ItoI_T(B) \;=\; \tfrac12\,\big(B_T^2 - B_0^2 - T\big),
\]
the machine-checked $\int_0^T B\,dB$. The It\^o correction $-T/2$ emerges as
follows: the summation-by-parts identity \lean{discrete\_squaring\_identity}
writes
\[
  \sum_k B_{s_k}(B_{s_{k+1}} - B_{s_k}) = \tfrac12\Big(B_T^2 - B_0^2 -
  \textstyle\sum_k (B_{s_{k+1}} - B_{s_k})^2\Big),
\]
whose left side is the discretized $\int_0^T B\,dB$, converging in $L^2$ to
$\ItoI_T(B)$, while the squared-increment sum converges in $L^2$ to $T$
(Section~\ref{sec:formula}); what is left is $\tfrac12(B_T^2 - B_0^2 - T)$. We display it because it is the smallest statement in which the
calculus is visibly non-classical, because it later anchors the remainder
analysis of It\^o's formula, and because, as Section~\ref{sec:scope}
notes, it is an example the bounded-derivative It\^o formula cannot reach
on its own, which is why it is proved directly.

\subsection{Design note: two isometries}

The library separately maintains a Wiener-integral isometry for
\emph{deterministic} integrands. The two are deliberately distinct: the
Wiener integral needs no filtration and serves the Gaussian-process and
distributional layers, while $\ItoI_T$ is the adapted object that the
martingale theory of Section~\ref{sec:process} requires. Collapsing the
two would shorten the code and damage the mathematics; the duplication is
the design, and it is recorded as such in the dependency blueprint so that
a future reader does not ``simplify'' it away.

\section{The integral as a process}\label{sec:process}

\subsection{The simple layer}

For a simple process $V$ and $t \le T$, the partially-summed elementary
integral $(V \bullet B)_t$ is adapted
(\lean{itoSimpleProcess\_adaptedAt}) and a martingale
(\lean{itoSimpleProcess\_isMartingale}). The martingale step reduces to the
conditional vanishing of adapted-weighted future increments
(Listing~\ref{lst:incr}), proved in \lean{condExp\_adapted\_mul\_increment}
by upgrading the unconditional statement $\E[\xi(B_{t_1} - B_{t_0})] = 0$ to
its conditional form $\E[\xi(B_{t_1} - B_{t_0}) \mid \F_u] = 0$ for $u \le
t_0$. By the characterizing property of conditional expectation
(\lean{ae\_eq\_condExp\_of\_forall\_setIntegral\_eq}), it suffices that
$\int_A \xi(B_{t_1} - B_{t_0})\,d\mu = 0$ for every $A \in \F_u$; and $\xi
\mathbf{1}_A$ is again bounded and $\F_{t_0}$-adapted, so the increment
$B_{t_1} - B_{t_0}$, independent of $\F_{t_0}$ with mean zero, makes the
integral factor as $\E[\xi \mathbf{1}_A]\,\E[B_{t_1} - B_{t_0}] = 0$. The simple layer also
carries the per-time isometry $\E[(V\bullet B)_t^2] = \int_0^t \E[V_s^2]\,ds$
(\lean{itoSimpleProcess\_isometry\_time}) and $L^2$-continuity in $t$
(\lean{itoSimpleProcessLp\_l2\_continuous}); the next subsection lifts all
three to general integrands at once.

\begin{leanlisting}{The conditional martingale-difference step: adapted weights kill future increments.}{lst:incr}
theorem condExp_adapted_mul_increment (hBmeas : ∀ t, Measurable (B t))
    {t₀ t₁ : ℝ≥0} (ht : t₀ ≤ t₁) {φ : Ω → ℝ}
    (hφ : AdaptedAt B t₀ φ) {C : ℝ} (hC : ∀ ω, |φ ω| ≤ C) :
    μ[fun ω => φ ω * (B t₁ ω - B t₀ ω) | natFiltration hBmeas t₀] =ᵐ[μ] 0
\end{leanlisting}

\subsection{The key identity}\label{sec:keyidentity}

For general $\varphi \in H_T$ the process is defined by a continuous linear
map (\lean{itoProcessCLM}, obtained by extending the simple layer through
\lean{itoProcessLM.extendOfNorm}), and the development's central structural
fact is Theorem~\ref{thm:key} (Listing~\ref{lst:key}).

\begin{theorem}[{\lean{itoProcessCLM\_eq\_condExpL2}}]\label{thm:key}
For every $t \le T$ and $\varphi \in H_T$, as elements of
$L^2(\Omega,\mu)$,
\[
  (\varphi \bullet B)_t \;=\; \E\big[\, \ItoI_T(\varphi) \;\big|\; \F_t
  \,\big],
\]
the right side realized by Mathlib's $L^2$ conditional-expectation
projection \lean{condExpL2}.
\end{theorem}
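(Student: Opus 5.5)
Both sides of the identity are continuous linear maps $H_T \to L^2(\mu)$ in $\varphi$, so the plan is to prove it on simple processes and extend by density. The left side is \lean{itoProcessCLM} at time $t$, continuous by construction through \lean{extendOfNorm}. The right side is the composition of \lean{itoIntegralCLM\_T} with \lean{condExpL2} onto $\F_t$, which is continuous linear because the projection has norm at most one. Two continuous maps into a Hausdorff space that agree on a dense set agree everywhere. The simple processes are dense in $H_T$ by \lean{simpleAssembly\_T\_denseRange}, so it suffices to verify the identity for $\varphi = V$ simple. On that dense range both extensions reduce to their defining simple-layer maps: $(V\bullet B)_t$ is the partially summed elementary integral, and $\ItoI_T(V)$ is \lean{itoAssembly\_T} of $V$.

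For simple $V = \sum_k \xi_k \mathbf{1}_{(s_k,t_k]}$, together with the $\{0\}\times A_0$ pieces, which contribute nothing to the integral, I would first refine the partition so that $t$ is a breakpoint. This splits every rectangle straddling $t$ into $(s_k,t]$ and $(t,t_k]$, with the same $\F_{s_k}$-measurable weight $\xi_k$, which is also $\F_t$-measurable for the second piece. The terminal integral then decomposes as
\[
  \ItoI_T(V) = (V\bullet B)_t + \sum_j \eta_j\,(B_{b_j}-B_{a_j}),
\]
where every remaining term has $t \le a_j$ and a bounded $\F_{a_j}$-measurable weight $\eta_j$.

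Next, the first summand $(V\bullet B)_t$ is $\F_t$-measurable by \lean{itoSimpleProcess\_adaptedAt} and lies in $L^2$, so \lean{condExpL2} fixes it. For each remaining term, \lean{condExp\_adapted\_mul\_increment} gives $\E[\eta_j(B_{b_j}-B_{a_j})\mid\F_{a_j}]=0$. The tower property with $\F_t \subseteq \F_{a_j}$ then gives conditional expectation zero onto $\F_t$. Linearity of \lean{condExpL2} finishes the simple case. Equivalently, one can invoke \lean{itoSimpleProcess\_isMartingale} between times $t$ and $T$ directly, since $(V\bullet B)_T = \ItoI_T(V)$.

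I expect the main obstacle to be bookkeeping rather than probability. Three things must line up. First, the $L^2$ class produced by \lean{condExpL2}, which lives in the $\F_t$-measurable subspace coerced back into $L^2(\mu)$, has to be matched with the a.e.-defined \lean{condExp} that the martingale lemmas are stated for; Mathlib's \lean{condExp\_ae\_eq\_condExpL2}-type lemmas should bridge this. Second, the partial integral at time $T$ must be shown to coincide with \lean{itoAssembly\_T}. Third, the refinement at $t$ must respect the \lean{Finsupp.linearCombination} representation of \lean{SimpleProcess}. I would handle the third by working with the martingale lemma at times $t \le T$ instead of refining explicitly, so that the identity reduces to the a.e. equality $\E[(V\bullet B)_T\mid\F_t] = (V\bullet B)_t$ together with the coercion lemma.
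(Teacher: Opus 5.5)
Your proposal is correct and is essentially the paper's proof: both sides are continuous linear in $\varphi$, they agree on the dense simple processes because there the identity is exactly the simple layer's martingale property between times $t$ and $T$, and \lean{DenseRange.equalizer} concludes. Your refinement-at-$t$ argument via \lean{condExp\_adapted\_mul\_increment} and the tower property only unfolds how that simple-layer martingale property is proved, so the shortcut you settle on at the end, invoking \lean{itoSimpleProcess\_isMartingale} directly, is precisely the paper's route.
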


\begin{leanlisting}{The key identity: the integral process is the conditional projection of its terminal value.}{lst:key}
theorem itoProcessCLM_eq_condExpL2 (T t : ℝ≥0) (hBmeas : ∀ u, Measurable (B u))
    (φ : Lp ℝ 2 (trimMeasure_T T hBmeas)) :
    itoProcessCLM T t hBmeas φ
      = (condExpL2 ℝ ℝ ((natFiltration hBmeas).le t)
          (itoIntegralCLM_T T hBmeas φ) : Lp ℝ 2 μ)
\end{leanlisting}

The proof is two lines of mathematics: both sides are continuous linear in
$\varphi$; on the dense simple subspace they agree, because there the
identity \emph{is} the simple layer's martingale property; conclude by the
equalizer property of dense ranges (\lean{DenseRange.equalizer}).
Everything else about the process is a corollary of Theorem~\ref{thm:key}
plus standard properties of conditional expectation:

\begin{itemize}
\item \emph{adaptedness}, since a \lean{condExpL2} projection lands in the
  $\F_t$-measurable subspace;
\item the \emph{martingale property}
  (\lean{itoIntegralProcessGen\_isMartingale}), by the conditional-expecta\-tion
  tower;
\item the \emph{contraction bound} $\|(\varphi\bullet B)_t\|_{L^2} \le
  \|\varphi\|_{H_T}$ (\lean{itoProcessCLM\_norm\_le}), since conditional
  expectation is a contraction;
\item the \emph{terminal isometry} at $t = T$
  (\lean{itoProcessCLM\_norm\_terminal});
\item \emph{$L^2$-continuity} of $t \mapsto (\varphi\bullet B)_t$
  (\lean{itoIntegralProcessGen\_l2\_continuous}), by a $t$-uniform
  contraction estimate against an approximating simple sequence
  (\lean{TendstoUniformly}), so that continuity transfers from the simple
  layer in the uniform limit.
\end{itemize}

Defining the process as a projection of its terminal value is what turns the
martingale theory of the It\^o integral into a sequence of corollaries rather
than a sequence of constructions.

\subsection{The time-indexed It\^o isometry}\label{sec:perT}

The terminal isometry above is the classical $\|\ItoI_T(\varphi)\|_{L^2}^2 =
\|\varphi\|_{H_T}^2$. Its refinement to intermediate times (the
energy law of the integral \emph{as a process}) is
Theorem~\ref{thm:isometry} (Listing~\ref{lst:iso}).

\begin{theorem}[{\lean{itoProcessCLM\_norm\_sq}}]\label{thm:isometry}
For every $t \le T$ and $\varphi \in H_T$,
\[
  \E\big[(\varphi \bullet B)_t^2\big] \;=\;
  \int_{(0,t]\times\Omega} \varphi^2 \; d\big((\lambda_T \otimes \mu)
  \!\restriction_{\mathcal{P}_T}\big) \;=\; \int_0^t \E[\varphi_s^2]\,ds .
\]
\end{theorem}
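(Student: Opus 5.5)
The plan is the density-and-continuity transfer of Section~\ref{sec:design}, with Theorem~\ref{thm:key} supplying the continuity needed on the left-hand side. Define two functionals on $H_T$:
\[
  L(\varphi) = \|(\varphi\bullet B)_t\|_{L^2(\mu)}^2, \qquad
  R(\varphi) = \int \mathbf{1}_{(0,t]\times\Omega}\,\varphi^2 \, d\big((\lambda_T\otimes\mu)\!\restriction_{\mathcal{P}_T}\big)
  = \big\|\mathbf{1}_{(0,t]\times\Omega}\cdot\varphi\big\|_{H_T}^2 .
\]
The statement is $L = R$ on all of $H_T$, and it will be proved in three steps.

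\emph{Step 1: both sides are continuous.} $L$ is the squared norm of the continuous linear map \lean{itoProcessCLM} evaluated at $\varphi$, so it is continuous. For $R$, the set $(0,t]\times\Omega$ is itself a predictable rectangle (take $A=\Omega\in\F_0$), so it is $\mathcal{P}_T$-measurable. Multiplication by its indicator is therefore a norm-$\le 1$ linear operator on $H_T$, and $R$ is the squared norm of that operator applied to $\varphi$, hence continuous. Working with the trimmed measure means the indicator's measurability must be established with respect to $\mathcal{P}_T$, not just the ambient product $\sigma$-algebra; \lean{generateFrom\_predictableRect} supplies exactly this.

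\emph{Step 2: agreement on the dense subspace.} For a simple process $V$, the left side is $\E[(V\bullet B)_t^2]$, which is \lean{itoSimpleProcess\_isometry\_time}. The extension \lean{itoProcessCLM} agrees with the simple-layer map on the image of \lean{simpleProcessL2\_T}. It then remains to identify $\int_0^t \E[V_s^2]\,ds$ with $R(V)$. This reduces to: (i) the trimmed integral of a $\mathcal{P}_T$-measurable function equals the untrimmed product integral (\lean{integral\_trim}); (ii) Fubini/Tonelli on $\lambda_T\otimes\mu$; and (iii) restricting $\lambda_T$ on $(0,T]$ to $(0,t]$.

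\emph{Step 3: extend by density.} By \lean{simpleAssembly\_T\_denseRange} the simple processes are dense in $H_T$. Two continuous functions that agree on a dense set agree everywhere (closedness of the equalizer, \lean{DenseRange.equalizer} in scalar form). This gives the first equality of the theorem. The second equality is then Tonelli once more for general $\varphi\in H_T$: $\varphi^2\ge0$ is $\mathcal{P}_T$-measurable, so $R(\varphi)=\int_0^t\int_\Omega \varphi_s^2\,d\mu\,ds$.

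An alternative route for the case $t = T$ is to use Theorem~\ref{thm:key}: the identity is then the terminal isometry, since $\mathbf{1}_{(0,T]}\varphi=\varphi$. I would not pursue this, since the density route handles every $t$ uniformly.

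The main obstacle I expect is bookkeeping in Step 2, not analysis. The simple layer's isometry is phrased via the pointwise process $s\mapsto V_s$ and Lebesgue measure on $[0,t]$. $R$, by contrast, is phrased through an $L^2$ class on a trimmed product measure, and that class is only an a.e.\ representative of the simple process. Passing between the two requires showing that the \lean{simpleProcessL2\_T} representative agrees a.e.\ (for the trimmed measure) with the concrete step function, and then moving a.e.\ equality from the trim back to the full measure. Here I would invoke \lean{ae\_eq\_of\_ae\_eq\_trim}, together with the observation that $\{0\}\times A_0$ pieces are $\lambda_T$-null and so contribute nothing.
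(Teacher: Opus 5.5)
Your proposal is correct and is essentially the paper's proof. Both use the equalizer pattern (\lean{DenseRange.equalizer}) on the continuous functionals $\varphi\mapsto\|(\varphi\bullet B)_t\|^2$ and $\varphi\mapsto\int_{(0,t]\times\Omega}\varphi^2$, which agree on the dense simple processes by the simple-layer isometry once the trimmed measure is reconciled on $(0,t]$. The paper does that reconciliation through the band decomposition $(0,t]\cup(t,T]$ in \lean{itoSimpleProcessLp\_band\_isometry}; you do it through \lean{integral\_trim}, Tonelli and restriction.

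As your own Step~1 shows, continuity of the left-hand functional follows directly from \lean{itoProcessCLM} being continuous linear, so Theorem~\ref{thm:key} is not actually needed there.
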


\begin{leanlisting}{The time-indexed It\^o isometry for general predictable integrands.}{lst:iso}
theorem itoProcessCLM_norm_sq {t T : ℝ≥0} (htT : t ≤ T)
    (hBmeas : ∀ u, Measurable (B u)) (φ : Lp ℝ 2 (trimMeasure_T T hBmeas)) :
    ‖itoProcessCLM T t hBmeas φ‖ ^ 2
      = ∫ z in (Set.Ioc 0 t ×ˢ (Set.univ : Set Ω)), (φ z) ^ 2
          ∂(trimMeasure_T T hBmeas)
\end{leanlisting}

The proof is again the equalizer pattern, against a different continuous
functional: $\varphi \mapsto \|(\varphi\bullet B)_t\|^2$ and
$\varphi \mapsto \int_{(0,t]} \varphi^2$ are both continuous, and they agree
on the dense simple processes by the band reconciliation
\lean{itoSimpleProcessLp\_band\_isometry}, which is where the trimmed
measure is decomposed across $(0,t] \cup (t,T]$, the supporting lemma being
\lean{integral\_rectTerm\_mul\_band}. The mathematical content is small once
the simple-layer isometry and the key identity are in place; we call it out
as a separate theorem because the second-moment law of the process is one of
the two facts every textbook labels ``the It\^o isometry,'' and stating it
for \emph{general} integrands at \emph{every} $t$ (not only at $t = T$,
and not only for simple processes) is what makes the integral-as-a-process
a complete object rather than a martingale with a missing variance law.

\section{It\^o's formula}\label{sec:formula}

\subsection{Architecture}

The route to It\^o's formula is discrete-to-continuous (Figure~\ref{fig:arch}):
\begin{enumerate}
\item \textbf{Discrete identity.} For any partition, second-order Taylor
  expansion of $f$ along the increments gives an exact discrete It\^o
  identity (\lean{DiscreteIto}) with an explicit Lagrange remainder.
\item \textbf{Weighted quadratic variation.} The $\sum f''(B_{t_k})
  (\Delta_k B)^2$ term converges in $L^2$ to $\int_0^T f''(B_s)\,ds$;
  this is proved for adapted weight processes
  (\lean{WeightedQuadraticVariation}), generalizing the unweighted $L^2$
  quadratic-variation limit and reusing the second- and fourth-moment
  Gaussian estimates of \lean{BrownianQuadraticVariation}.
\item \textbf{Remainder.} The third-order remainder is $O(1/n)$ in $L^2$
  under the $|f'''| \le C_3$ bound (\lean{ItoFormulaRemainder}); the
  time-dependent variant needs a two-dimensional remainder with mixed
  partials, also $O(1/n)$ (\lean{ItoFormulaTDRemainder}).
\item \textbf{Riemann bridge.} The discretized stochastic term is $\ItoI_T$
  applied to a step-process realization of $f'(B)$; the bridge lemmas
  identify partial sums with the CLM applied to \lean{stepφ}.
\item \textbf{Limit.} The step integrands converge in $H_T$ to a realization
  $g_{f'}$ of $s \mapsto f'(B_s)$. Predictability of the limit is free
  (the limit is taken \emph{inside} $H_T$, Section~\ref{sec:integral}); the
  existence of the realization for bounded continuous data is
  \lean{itoIntegralCLM\_T\_of\_bdd\_cont}, a dominated-convergence argument
  on the trimmed measure.
\end{enumerate}

\begin{figure}[t]
\centering
\begin{tikzpicture}[
  box/.style={draw, rounded corners, align=center, font=\scriptsize,
    text width=2.4cm, minimum height=8mm, inner sep=2pt},
  >={Stealth[width=4pt]}, x=1cm, y=1cm]
  \node[box] (rect) at (0,3)    {predictable $\pi$-system and density};
  \node[box] (clm)  at (4,3)    {$L^2$ It\^o integral $\ItoI_T$};
  \node[box] (proc) at (8,3)    {integral as a process};
  \node[box] (ito)  at (4,1.5)  {It\^o's formula};
  \node[box] (disc) at (0,0)    {discrete It\^o identity};
  \node[box] (qv)   at (3,0)    {QV in $L^2$ ($2t^2/n$)};
  \node[box] (wqv)  at (6,0)    {weighted QV in $L^2$};
  \node[box] (rem)  at (9,0)    {remainder $O(1/n)$};
  \draw[->] (rect) -- (clm);
  \draw[->] (clm) -- (proc);
  \draw[->] (qv) -- (wqv);
  \draw[->] (clm) -- (ito);
  \draw[->] (disc) -- (ito);
  \draw[->] (wqv) -- (ito);
  \draw[->] (rem) -- (ito);
\end{tikzpicture}
\caption{Proof architecture. The isometry layer (top) builds the integral and
the process; the discrete-to-continuous limit layer (bottom) supplies the three
vanishing terms; they meet at It\^o's formula.}
\label{fig:arch}
\end{figure}

\subsection{The fluctuation engine: quadratic variation in $L^2$}

The second-order term of It\^o's formula is deterministic in the limit for one
reason: the squared increments of Brownian motion have vanishing fluctuation
in $L^2$. Along the uniform partition of $[0,t]$ into $n$ pieces of width
$\Delta = t/n$,
\[
  \Big\| \sum_k (B_{s_{k+1}} - B_{s_k})^2 - t \Big\|_{L^2(\mu)}^2
  = \frac{2t^2}{n} \longrightarrow 0
\]
(\lean{tendsto\_qv}), and the value is exact, not asymptotic. Writing $Y_k =
(B_{s_{k+1}} - B_{s_k})^2 - \Delta$ for the centered squared increment, the
$Y_k$ over disjoint intervals are independent by the weak Markov property of
$B$, so the cross terms vanish and $\E[(\sum_k Y_k)^2] = \sum_k \E[Y_k^2]$.
Each term is the Gaussian kurtosis,
\[
  \E[Y_k^2] = \E[(B_{s_{k+1}} - B_{s_k})^4] - \Delta^2 = 3\Delta^2 - \Delta^2
  = 2\Delta^2,
\]
using $\E[X^4] = 3(\operatorname{Var} X)^2$ for a centered Gaussian
(\lean{integral\_pow4\_gaussianReal}, transported to the increment law by
\lean{integral\_increment\_pow4}). Summing, $n \cdot 2\Delta^2 = 2t^2/n$. The
fourth-moment constant $3$, and nothing else, is the reason the quadratic
variation is $t$ rather than $0$; this is the formal content of $(\mathrm
dB)^2 = \mathrm dt$.

\subsection{The weighted second-order term}

It\^o's formula needs this limit carrying a weight: for a bounded adapted
process $w$ with continuous paths,
\[
  \sum_k w_{t_k}\,(B_{t_{k+1}} - B_{t_k})^2 \longrightarrow \int_0^T w_s\,ds
  \qquad \text{in } L^2
\]
(\lean{tendsto\_weighted\_qv\_process}), instantiated at $w_s = f''(B_s)$ for
the autonomous formula and $w_s = f_{xx}(s, B_s)$ for the time-dependent one.
The proof splits the difference into a fluctuation term and a Riemann term.
The fluctuation term $\sum_k w_{t_k}\big((B_{t_{k+1}} - B_{t_k})^2 -
\Delta_k\big)$ vanishes in $L^2$ by the orthogonality-plus-kurtosis estimate
above, now carrying the $\F_{t_k}$-measurable bounded weight: adaptedness of
$w_{t_k}$ is exactly what keeps the cross terms zero
(\lean{integral\_adapted\_mul\_centered\_sq}). The Riemann term $\sum_k
w_{t_k}\Delta_k - \int_0^T w_s\,ds$ vanishes pathwise by continuity of $s
\mapsto w_s(\omega)$, then in $L^2$ by dominated convergence against the bound
$|w| \le C$. Neither half uses where the weight came from, only that it is
adapted, bounded, and path-continuous, which is why one lemma serves both the
autonomous and the time-dependent formula.

\subsection{The remainder}

What survives after the first- and second-order terms is a sum of cubic Taylor
remainders, and it must vanish in $L^2$. The per-step remainder $R_k = f(B_{s_{k+1}})
- f(B_{s_k}) - f'(B_{s_k})(B_{s_{k+1}} - B_{s_k}) - \tfrac12 f''(B_{s_k})
(B_{s_{k+1}} - B_{s_k})^2$ obeys the cubic bound $|R_k| \le C_3\,|B_{s_{k+1}} -
B_{s_k}|^3$ under $|f'''| \le C_3$ (\lean{abs\_discreteTaylorRemainder\_le},
three applications of the convex mean-value inequality, each gaining one power
of the increment). Hence $\E[R_k^2] = O\big(\E[(B_{s_{k+1}} - B_{s_k})^6]\big)
= O(\Delta_k^3)$ by the Gaussian sixth moment
(\lean{integral\_pow6\_gaussianReal}), and Cauchy--Schwarz across the $n$ terms
gives
\[
  \E\Big[\Big(\textstyle\sum_k R_k\Big)^2\Big] \le n \sum_k \E[R_k^2]
  = O\big(n \cdot n \cdot (T/n)^3\big) = O(1/n) \longrightarrow 0
\]
(\lean{ItoFormulaRemainder}). The third order is the only place a third
derivative is used, and where boundedness of $f'''$ enters; the time-dependent
variant runs the same estimate on a two-dimensional Taylor remainder with
mixed partials (\lean{ItoFormulaTDRemainder}).

\subsection{The bounded-derivative It\^o formula}

\begin{theorem}[{\lean{ito\_formula\_L2\_bddDeriv}}]\label{thm:ito}
Let $B$ satisfy the pre-Brownian interface with measurable coordinates and
continuous paths, and let $f : \R \to \R$ be everywhere three-times
differentiable with $|f'| \le C_1$, $|f''| \le C_2$, $|f'''| \le C_3$. Then
there exists $g_{f'} \in H_T$, an $H_T$-realization of $s \mapsto f'(B_s)$,
such that $\mu$-a.e.
\[
  f(B_T) - f(B_0) \;=\; \ItoI_T(g_{f'}) \;+\; \tfrac12 \int_{(0,T]}
  f''(B_s)\,ds .
\]
\end{theorem}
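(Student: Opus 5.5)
We follow the discrete-to-continuous architecture of Figure~\ref{fig:arch}, working along the uniform partition $s_k = kT/n$, $k=0,\dots,n$, of $[0,T]$. First we fix the realization $g_{f'}$: since $f'$ is continuous and bounded by $C_1$ and $B$ has continuous paths, the step processes $\mathrm{step}_n(\omega,s) = \sum_k f'(B_{s_k}(\omega))\mathbf{1}_{(s_k,s_{k+1}]}(s)$ are simple adapted elements of $H_T$. They converge pointwise on $(0,T]\times\Omega$ to $f'(B_s(\omega))$ and are dominated by $C_1$ on a finite measure space. Dominated convergence on the trimmed measure, packaged in \lean{itoIntegralCLM\_T\_of\_bdd\_cont}, shows they are Cauchy in $H_T$. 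We take $g_{f'}$ to be their $H_T$-limit, which is automatically predictable because the limit is taken inside $H_T$. By continuity of $\ItoI_T$ (Section~\ref{sec:integral}), $\ItoI_T(\mathrm{step}_n) \to \ItoI_T(g_{f'})$ in $L^2(\mu)$. The Riemann bridge identifies $\ItoI_T(\mathrm{step}_n)$ with the partial sum $\sum_k f'(B_{s_k})(B_{s_{k+1}}-B_{s_k})$ a.e.

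Next comes the exact discrete identity. Telescoping and Taylor expansion (\lean{DiscreteIto}) give, pointwise,
\[
f(B_T)-f(B_0) = \sum_k f'(B_{s_k})\Delta_k B + \tfrac12\sum_k f''(B_{s_k})(\Delta_k B)^2 + \sum_k R_k ,
\]
with $R_k$ the cubic Lagrange remainder. The left side does not depend on $n$, and we treat the three terms on the right separately.

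\begin{itemize}
\item The first term converges in $L^2$ to $\ItoI_T(g_{f'})$, as shown above.
\item The second term converges in $L^2$ to $\tfrac12\int_{(0,T]} f''(B_s)\,ds$. This follows from \lean{tendsto\_weighted\_qv\_process} with weight $w_s = f''(B_s)$, which is adapted (a measurable function of $B_s$), bounded by $C_2$, and path-continuous.
\item The remainder is $O(1/n)$ in $L^2$. We use $|R_k|\le C_3|\Delta_k B|^3$, then the Gaussian sixth moment, then Cauchy--Schwarz over the $n$ terms (\lean{ItoFormulaRemainder}).
\end{itemize}

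Hence the constant sequence $f(B_T)-f(B_0)$ converges in $L^2$ to $\ItoI_T(g_{f'}) + \tfrac12\int_{(0,T]} f''(B_s)\,ds$. Limits in $L^2(\mu)$ are unique as elements of the space, so the two sides agree $\mu$-a.e. Membership of each side in $L^2$ holds because $|f(B_T)-f(B_0)|\le C_1|B_T-B_0|$ and the time integral is bounded by $C_2T$.

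The main obstacle is bookkeeping rather than estimation: we must transfer each pointwise or a.e.\ identity into equalities of $L^2$ classes, namely the Riemann bridge equating the CLM value with the concrete partial sum, and the Bochner time integral $\int f''(B_s)\,ds$ as a measurable $\Omega$-function. We must also check that the $g_{f'}$ produced by the density argument is the same element used in the stochastic-term limit. To handle this, we would first prove the bridge lemma for arbitrary bounded $\F_{s_k}$-measurable coefficients on the uniform partition and then specialize. The three $L^2$ limits can then be combined by the triangle inequality in $L^2(\mu)$, finished with \lean{tendsto\_nhds\_unique}.
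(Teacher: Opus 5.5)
Your proposal is correct and follows the paper's proof essentially step for step. The shared steps are the uniform-partition discrete Taylor identity, the realization $g_{f'}$ obtained as the $H_T$-limit of the step processes by dominated convergence on the trimmed measure, the Riemann bridge together with the isometry of $\ItoI_T$ for the stochastic term, weighted quadratic variation for the second-order term, the cubic-remainder, sixth-moment and Cauchy--Schwarz bound for $R_n$, and uniqueness of $L^2$ limits to get the $\mu$-a.e.\ identity. Combining the three errors with the triangle inequality in $L^2(\mu)$, rather than the paper's $(a+b+c)^2 \le 3a^2+3b^2+3c^2$, is an immaterial difference.
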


\begin{leanlisting}{It\^o's formula, bounded-derivative class (hypotheses abbreviated).}{lst:ito}
theorem ito_formula_L2_bddDeriv (hBmeas : ∀ t, Measurable (B t))
    (hBcont : ∀ ω, Continuous (fun s => B s ω)) (T : ℝ≥0)
    {f f' f'' f''' : ℝ → ℝ}
    (hf : ∀ x, HasDerivAt f (f' x) x) (hf' : ∀ x, HasDerivAt f' (f'' x) x)
    (hf'' : ∀ x, HasDerivAt f'' (f''' x) x)
    (hf1 : ∀ x, |f' x| ≤ C1) (hf2 : ∀ x, |f'' x| ≤ C2) (hf3 : ∀ x, |f''' x| ≤ C3) :
    ∃ gf' : Lp ℝ 2 (trimMeasure_T T hBmeas),
      (fun ω => f (B T ω) - f (B 0 ω)) =ᵐ[μ]
        (fun ω => (itoIntegralCLM_T T hBmeas gf') ω
          + (1 / 2) * ∫ s in Set.Ioc 0 T, f'' (B s ω) ∂timeMeasure)
\end{leanlisting}

\emph{Proof sketch.} Fix the uniform partition of $[0,T]$ into $n$ pieces and
apply second-order Taylor expansion of $f$ to each increment. The telescoping
sum gives the exact discrete identity
\[
  f(B_T) - f(B_0) = S_n + \tfrac12 Q_n + R_n,
\]
with $S_n = \sum_k f'(B_{s_k})(B_{s_{k+1}} - B_{s_k})$ the discretized
stochastic integral, $Q_n = \sum_k f''(B_{s_k})(B_{s_{k+1}} - B_{s_k})^2$ the
weighted squared-increment sum, and $R_n$ the cubic remainder. Each term has a
limit established above: $S_n \to \ItoI_T(g_{f'})$ in $L^2$, because the step
integrands converge to $g_{f'}$ in $H_T$ and $\ItoI_T$ is isometric; $Q_n \to
\int_0^T f''(B_s)\,ds$ in $L^2$ by the weighted quadratic variation; and $R_n
\to 0$ in $L^2$ by the remainder estimate. Setting $a = S_n -
\ItoI_T(g_{f'})$, $b = \tfrac12(Q_n - \int_0^T f''(B_s)\,ds)$, and $c = R_n$,
the total error is $a + b + c$, and the elementary inequality $(a+b+c)^2 \le
3a^2 + 3b^2 + 3c^2$ drives $\E[(\text{total error})^2] \to 0$. The two sides of
the displayed identity therefore agree in $L^2$, hence $\mu$-almost everywhere.

Path continuity is supplied by the \lean{BrownianMotion} package's
continuous modification; it is genuinely used, as the Riemann limits run along
refining partitions evaluated on paths. The bounded-derivative hypotheses make
the $L^2$ analysis self-contained; their cost is discussed in
Section~\ref{sec:scope}.

\subsection{The time-dependent formula}

\begin{theorem}[{\lean{ito\_formula\_td\_L2\_bddDeriv}}]\label{thm:itoTD}
Let $f : \R \times \R \to \R$ admit everywhere partial derivatives
$f_t, f_x, f_{xx}$ together with $f_{tt}, f_{tx}, f_{xxx}$, all uniformly
bounded, with $f_x$ and $f_{xx}$ jointly continuous. Then there exists
$g_{f_x} \in H_T$ realizing $s \mapsto f_x(s, B_s)$ with, $\mu$-a.e.,
\[
  f(T, B_T) - f(0, B_0) \;=\; \ItoI_T(g_{f_x}) \;+\; \int_{(0,T]}
  \Big( f_t + \tfrac12 f_{xx} \Big)(s, B_s)\,ds .
\]
\end{theorem}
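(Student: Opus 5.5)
The plan mirrors the proof of Theorem~\ref{thm:ito}, with the time variable carried through every step. Fix the uniform partition $s_k = kT/n$ of $[0,T]$ and write $\Delta = T/n$ and $\Delta_k B = B_{s_{k+1}} - B_{s_k}$. Telescoping gives
\[
  f(T,B_T) - f(0,B_0) = \sum_k \big[f(s_{k+1},B_{s_{k+1}}) - f(s_k,B_{s_k})\big].
\]
Each summand is expanded by a two-variable Taylor formula around $(s_k,B_{s_k})$:
\[
  f_t(s_k,B_{s_k})\,\Delta + f_x(s_k,B_{s_k})\,\Delta_k B + \tfrac12 f_{xx}(s_k,B_{s_k})\,(\Delta_k B)^2 + R_k .
\]
Summing produces the exact discrete identity
\[
  f(T,B_T)-f(0,B_0) = S_n + D_n + \tfrac12 Q_n + R_n,
\]
where $S_n = \sum_k f_x(s_k,B_{s_k})\,\Delta_k B$, $D_n = \sum_k f_t(s_k,B_{s_k})\,\Delta$, $Q_n = \sum_k f_{xx}(s_k,B_{s_k})(\Delta_k B)^2$, and $R_n = \sum_k R_k$.

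The four limits are handled separately. For $S_n$, I use the time-dependent Riemann bridge (\lean{ItoIntegralRiemannBridgeTD}) to identify $S_n$ with $\ItoI_T$ applied to the step process $\sum_k f_x(s_k,B_{s_k})\mathbf{1}_{(s_k,s_{k+1}]}$. Because $f_x$ is bounded and jointly continuous and paths are continuous, these steps converge pointwise on $(0,T]\times\Omega$ to $f_x(s,B_s)$. Dominated convergence on the trimmed measure, as in \lean{itoIntegralCLM\_T\_of\_bdd\_cont}, then gives convergence in $H_T$ to some $g_{f_x}$. The isometry of $\ItoI_T$ yields $S_n \to \ItoI_T(g_{f_x})$ in $L^2$, and predictability of the limit is automatic. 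For $Q_n$, I apply the weighted quadratic-variation lemma \lean{tendsto\_weighted\_qv\_process} with $w_s = f_{xx}(s,B_s)$, which is bounded, adapted and path-continuous by joint continuity of $f_{xx}$. For $D_n$, I need convergence to $\int_{(0,T]} f_t(s,B_s)\,ds$. Joint continuity of $f_t$ is not assumed, but boundedness of $f_{tt}$ and $f_{tx}$ makes $f_t$ Lipschitz, hence continuous. The Riemann sums therefore converge pathwise, and bounded convergence upgrades this to $L^2$.

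The main obstacle is the remainder $R_n$. I split $R_k$ into a pure-$x$ cubic part and a time part. The first piece is
\[
  f(s_{k+1},B_{s_{k+1}}) - f(s_{k+1},B_{s_k}) - f_x(s_{k+1},B_{s_k})\,\Delta_k B - \tfrac12 f_{xx}(s_{k+1},B_{s_k})(\Delta_k B)^2 ,
\]
which \lean{abs\_discreteTaylorRemainder\_le} bounds by $C\,|\Delta_k B|^3$ using $|f_{xxx}|\le C$. The remaining correction terms come from moving the base time from $s_{k+1}$ back to $s_k$ in $f$, $f_x$, and $f_{xx}$:
\begin{itemize}
  \item $f(s_{k+1},\cdot)-f(s_k,\cdot) - f_t\Delta$ is $O(\Delta^2)$ by the $f_{tt}$ bound;
  \item $\big(f_x(s_{k+1},\cdot)-f_x(s_k,\cdot)\big)\Delta_k B$ is $O(\Delta\,|\Delta_k B|)$ by the $f_{tx}$ bound;
  \item $\big(f_{xx}(s_{k+1},\cdot)-f_{xx}(s_k,\cdot)\big)(\Delta_k B)^2$ is the delicate term.
\end{itemize}
Without a bound on $f_{txx}$, I only know that the $f_{xx}$ difference is bounded and tends to $0$ by joint continuity. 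So I plan to treat this last term differently, by absorbing it into the weighted-QV argument. Concretely, I write $Q_n$ with weights evaluated at $s_{k+1}$ in the $x$-expansion, and note that $\sum_k \big(f_{xx}(s_{k+1},B_{s_k}) - f_{xx}(s_k,B_{s_k})\big)(\Delta_k B)^2$ has $L^1$ norm at most $\sup_k|\cdot| \cdot \sum_k (\Delta_k B)^2$. That supremum tends to $0$ pathwise by uniform continuity on compacts, so the term vanishes in probability and, with a domination argument, in $L^2$. This is the step I expect to need the most care.

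The remaining terms are easy to control. By Cauchy--Schwarz over the $n$ summands, the Gaussian sixth moment gives $\E[R_n^2] = O(n\cdot n\cdot\Delta^3) = O(1/n)$ for the cubic part. The other polynomial pieces contribute at most $O(n^2\Delta^4)$ and $O(n^2\Delta^3)$, both of which tend to $0$. Finally, with $a,b,c,d$ the errors of the four terms, $(a+b+c+d)^2 \le 4(a^2+b^2+c^2+d^2)$ shows the total error tends to $0$ in $L^2$. The two sides agree in $L^2$, hence $\mu$-a.e., and combining $D_n + \tfrac12 Q_n$ gives the displayed integrand $f_t + \tfrac12 f_{xx}$.
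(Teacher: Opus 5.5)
Your proposal is correct, and its overall architecture matches the paper's: the exact discrete Taylor identity, the time-dependent Riemann bridge for $S_n$, \lean{tendsto\_weighted\_qv\_process} with $w_s = f_{xx}(s,B_s)$, Riemann sums for the drift using the derived continuity of $f_t$, and a Cauchy--Schwarz glue.

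Where you diverge is the order of the two-variable expansion, and that choice is what creates your ``delicate term.'' Taking the spatial Taylor step at the new time $s_{k+1}$ leaves $\sum_k d_k(\Delta_k B)^2$ with $d_k = f_{xx}(s_{k+1},B_{s_k}) - f_{xx}(s_k,B_{s_k})$. No hypothesis controls this term at a rate.

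The package $f_{tt}, f_{tx}, f_{xxx}$ fits the opposite split, $[f(s_{k+1},B_{s_{k+1}}) - f(s_k,B_{s_{k+1}})] + [f(s_k,B_{s_{k+1}}) - f(s_k,B_{s_k})]$. Read $f_{tx}$ as $\partial_x f_t$, the form in which it yields continuity of $f_t$. Then:
\begin{itemize}
\item the first bracket is $f_t(s_k,B_{s_{k+1}})\,\Delta + O(\Delta^2)$ by $f_{tt}$;
\item moving $f_t$ back to $B_{s_k}$ costs $O(\Delta\,|\Delta_k B|)$ by $f_{tx}$;
\item the second bracket is the spatial expansion at the adapted point $(s_k,B_{s_k})$, with cubic remainder controlled by $f_{xxx}$.
\end{itemize}
Every piece is then $O(1/n)$ in $L^2$. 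This matches the paper's ``two-dimensional remainder with mixed partials, also $O(1/n)$,'' and joint continuity of $f_{xx}$ is needed only for path-continuity of the QV weight.

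Your route still closes. With $|f_{xx}| \le C$, $\max_k |d_k| \to 0$ pathwise by uniform continuity of $f_{xx}$ on $[0,T]$ times the compact range of the path. Moreover
\[
\E\Big[\Big(\textstyle\sum_k d_k (\Delta_k B)^2\Big)^2\Big] \le 8C^2\,\E\Big[\Big(\textstyle\sum_k (\Delta_k B)^2 - T\Big)^2\Big] + 2T^2\,\E\big[\max_k d_k^2\big].
\]
The first term on the right equals $16C^2T^2/n$, and the second vanishes by bounded convergence. You need this split because $\sum_k (\Delta_k B)^2$ is unbounded, so plain bounded convergence does not apply directly.

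Your $f_x$-shift bullet also uses $\partial_t f_x$ rather than $\partial_x f_t$. The bound $|f_x(t',x) - f_x(t,x)| \le C|t'-t|$ does follow: apply the mean value inequality in both variables to $f(t',y) - f(t',x) - f(t,y) + f(t,x)$. But it is an extra step.

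So your version is sound, but it trades the paper's quantitative $O(1/n)$ remainder for a rate-free continuity argument and uses joint continuity of $f_{xx}$ twice.
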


Two formalization points. First, joint continuity of $f_t$ is not assumed:
it is \emph{derived} from the bounded mixed partials
(\lean{continuous\_uncurry\_of\_bdd\_partials}), keeping the hypothesis
package at what a user can actually check. Second, the time-direction Taylor
terms are handled by a boundary-cancellation decomposition in which three
telescoping error sums vanish separately; the $L^2$ bound that glues the
stochastic and time directions is the three-term Cauchy--Schwarz estimate
visible in the remainder lemma's proof, where the squared error is dominated
by $3(\text{drift})^2 + \tfrac34(\text{QV})^2 + 3(\text{remainder})^2$ and
each term is sent to zero separately.

\subsection{Supporting results}

\paragraph{Quadratic variation.} For a process with measurable coordinates
and centered Gaussian increments of variance $t - s$ (\emph{neither
independence of increments nor path continuity is assumed}), the expected
squared-increment sums along equipartitions of $[0,t]$ converge to $t$
(\lean{qv\_equals\_t}); the $L^2$-limit form along refining partitions is in
\lean{QuadraticVariationL2}. That the hypothesis set is weaker than the
textbook's (the $L^1$ statement needs only the second moment of the
increments) is recorded in the statement, and it strictly strengthens the
classical claim.

\paragraph{Expectation-level It\^o.} For $f \in C^2$ with bounded
derivatives, $\E[f(B_t)] = f(0) + \tfrac12 \int_0^t \E[f''(B_s)]\,ds$
(\lean{expectation\_ito}; instance form
\lean{expectation\_ito\_isPreBrownian}). Historically the library's first
It\^o-flavored result, it now lives in the Feynman--Kac module and is the
bridge between the stochastic tower of this paper and the analytic
(heat-kernel) tower of the parent library.

\subsection{Notes on the proof engineering}

The density argument is the only place a monotone-class induction was unavoidable;
expressing it against a trimmed measure (so that the inductively-built sets
live in the right $\sigma$-algebra by construction) is what kept it
tractable. And the dominated-convergence step that produces the It\^o-formula
integrand realization (\lean{itoIntegralCLM\_T\_of\_bdd\_cont}) is delicate
precisely because the dominating function must itself be predictable; here too
the trimmed-measure design (rather than ambient $L^2$ with a side condition)
is what makes the domination land inside $H_T$ without a separate
measurability obligation.

\section{From $L^2$ to pathwise: the continuous local martingale}\label{sec:pathwise}

Sections~\ref{sec:process} and~\ref{sec:formula} are an $L^2$ theory: the
process $t \mapsto (\varphi\bullet B)_t$ is continuous \emph{as a curve in}
$L^2(\mu)$, and its martingale and isometry laws are identities of $L^2$
elements. The classical object is sharper. For almost every $\omega$ it has a
continuous \emph{path} $t \mapsto (\varphi\bullet B)_t(\omega)$, and it is a
continuous local martingale in the pathwise sense. This section builds that
object, first on $[0,T]$ and then on the whole half-line.

\subsection{The pathwise gap and the maximal inequality}

$L^2$-continuity in $t$ does not by itself give a single $\omega$-path that is
continuous: it controls each time separately, not the trajectory. The bridge
is a maximal inequality. The \lean{BrownianMotion} package supplies the $L^2$
maximal inequality for the integral process (\lean{maximal\_ineq\_norm});
Chebyshev turns it into a tail bound, a geometric subsequence makes the tails
summable, and Borel--Cantelli makes the approximating step-process integrals
converge uniformly on $[0,T]$ for almost every $\omega$. A uniform limit of
continuous functions is continuous, so the limit path is continuous off a null
set.

\subsection{The continuous modification on $[0,T]$}

Carrying that argument out gives an almost-surely continuous modification.

\begin{theorem}[{\lean{exists\_continuous\_modification\_itoProcess}}]\label{thm:mod}
For every $\varphi \in H_T$ there is a process
$\widetilde{X} : \R_{\ge 0} \times \Omega \to \R$ with $\mu$-a.e.\ continuous
paths such that $\widetilde{X}_t = (\varphi\bullet B)_t$ in $L^2(\mu)$ for
every $t \le T$.
\end{theorem}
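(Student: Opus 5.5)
The plan follows the route sketched in the preceding subsection: approximate $\varphi$ by simple processes, control the difference of the approximating integral processes uniformly in time with the maximal inequality, and pass to an almost-sure uniform limit via Borel--Cantelli.

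\emph{Approximation.} By density of the simple processes in $H_T$ (\lean{simpleAssembly\_T\_denseRange}), choose simple processes $V_n$ with $\|V_n - \varphi\|_{H_T} \le 4^{-n}$. For each simple $V_n$ the elementary process $t \mapsto (V_n \bullet B)_t(\omega)$ is a finite sum of terms $\xi_k(\omega)\,(B_{t\wedge t_{k+1}}(\omega) - B_{t\wedge t_k}(\omega))$. Since $B$ has continuous paths, this process is continuous in $t$ for \emph{every} $\omega$. By construction of \lean{itoProcessCLM} as an extension of the simple layer, its $L^2$ class at time $t$ is $(V_n\bullet B)_t$.

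\emph{Maximal inequality and Borel--Cantelli.} The difference $D_n = V_{n+1} - V_n$ is again simple, and $(D_n\bullet B)$ is a continuous martingale on $[0,T]$ (\lean{itoSimpleProcess\_isMartingale}). Apply the $L^2$ maximal inequality \lean{maximal\_ineq\_norm}. Because the paths are continuous, the supremum over $[0,T]$ equals the supremum over rational times, so a countable version of the inequality suffices. Combined with the terminal isometry, this gives
\[
\E\Big[\sup_{t\le T} |(D_n\bullet B)_t|^2\Big] \le 4\,\|D_n\|_{H_T}^2 \le 4\cdot(2\cdot 4^{-n})^2 .
\]
Chebyshev then yields $\mu(\sup_{t\le T}|(D_n\bullet B)_t| > 2^{-n}) \le C\,4^{-n}$, which is summable. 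By Borel--Cantelli, for almost every $\omega$ the sequence $(V_n\bullet B)_\cdot(\omega)$ is uniformly Cauchy on $[0,T]$.

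\emph{Construction and identification.} On the good set $G$, define $\widetilde X_t(\omega) = \lim_n (V_n\bullet B)_{t\wedge T}(\omega)$. Off $G$, set $\widetilde X = 0$. Freezing the time at $T$ extends $\widetilde X$ to all of $\R_{\ge 0}$ with continuous paths. A uniform limit of continuous functions is continuous, so $\widetilde X$ has continuous paths on $G$, hence $\mu$-a.e. Measurability of each $\widetilde X_t$ follows because it is a pointwise limit of measurable functions, multiplied by the indicator of the measurable set $G$.

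For fixed $t \le T$, the contraction bound (\lean{itoProcessCLM\_norm\_le}) gives $(V_n\bullet B)_t \to (\varphi\bullet B)_t$ in $L^2$. Some subsequence therefore converges a.e., and its a.e. limit must coincide with $\widetilde X_t$. Hence $\widetilde X_t = (\varphi\bullet B)_t$ as elements of $L^2(\mu)$, which proves the theorem.

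\emph{Main obstacle.} I expect the hardest step to be the bookkeeping around the maximal inequality. It must be applied to the everywhere-defined simple-layer representatives rather than to $L^2$ classes, with the continuous-time supremum reduced to a countable one and shown measurable. A secondary difficulty is matching the pointwise simple process with the \lean{itoProcessCLM} class at each $t$.
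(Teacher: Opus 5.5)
Your proposal is correct and follows essentially the same route as the paper. Both arguments approximate $\varphi$ by simple processes, apply the $L^2$ maximal inequality with Chebyshev along a geometric subsequence, and use Borel--Cantelli to get almost-sure uniform convergence on $[0,T]$; the limit is then continuous, and it is identified with $(\varphi\bullet B)_t$ at each fixed $t$ through the $L^2$ convergence of the approximants. The only cosmetic difference is in that last step: the paper cites uniqueness of limits in measure (\lean{tendstoInMeasure\_ae\_unique}), while you extract an a.e.\ convergent subsequence, which amounts to the same argument.
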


The representative \lean{itoContinuousMod} is the pointwise limit
(\lean{limUnder}) of the approximating step integrals on the good set where
they converge uniformly. It equals $(\varphi\bullet B)_t$ a.e.\ at each $t$ by
uniqueness of limits in measure (\lean{tendstoInMeasure\_ae\_unique}), and its
paths are continuous by the uniform-convergence theorem
(\lean{TendstoUniformlyOn.continuousOn}); the supporting bound on the running
supremum over $[0,t]$ is \lean{norm\_le\_iSup\_Iic}.

\subsection{The local martingale on the null-augmented filtration}

A modification fixes the paths but can break adaptedness at the repaired
times, since the continuous representative is defined through a limit taken off
a null set. The remedy is to complete the filtration. We adjoin the $\mu$-null
sets to each $\F_t$, forming the null-augmented filtration
$\F_t^{\mu} = \F_t \vee \mathcal{N}$ (\lean{augFiltration}, the supremum of
\lean{natFiltration} with the constant null $\sigma$-algebra). Completeness is
all that is added; right-continuity, the other half of the usual conditions,
is neither added nor needed here.

\begin{theorem}[{\lean{exists\_continuous\_localMartingale\_modification}}]\label{thm:locmart}
For every $\varphi \in H_T$ there is a process $X$ with everywhere-continuous
paths such that $X_t = (\varphi\bullet B)_t$ in $L^2(\mu)$ for each $t \le T$,
and $X$ is a local martingale for $(\F^{\mu}_t)$.
\end{theorem}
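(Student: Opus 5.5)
Starting from Theorem~\ref{thm:mod}, we take the a.e.-continuous modification $\widetilde{X}$ and its good set $G$, on which the approximating step integrals converge uniformly on $[0,T]$. The everywhere-continuous representative is $X_t(\omega) = \widetilde{X}_{t\wedge T}(\omega)\mathbf{1}_G(\omega)$. We freeze at time $T$, so that the process is defined on all of $\R_{\ge 0}$, and set it to zero off $G$. Every path is then continuous: on $G$ it is a uniform limit of continuous paths stopped at $T$, and off $G$ it is identically zero. Since $G^c$ is $\mu$-null, $X_t = \widetilde{X}_t = (\varphi\bullet B)_t$ in $L^2(\mu)$ for each $t\le T$.

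Adaptedness to $(\F^\mu_t)$ comes next. Each step integral at time $t$ is $\F_t$-measurable, being a finite sum of adapted weights times increments up to $t$. The set $G$ is measurable and its complement is null, so $G^c\in\mathcal N\subseteq\F^\mu_t$. For $t\le T$, $X_t$ is then the pointwise limit of the $\F_t$-measurable step integrals multiplied by $\mathbf 1_G$, hence $\F^\mu_t$-measurable. For $t>T$, we use $\F^\mu_T\subseteq\F^\mu_t$.

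The martingale property is transferred from the corollary of Theorem~\ref{thm:key} (\lean{itoIntegralProcessGen\_isMartingale}). For $s\le t$ it gives $\E[(\varphi\bullet B)_t\mid\F_s]=(\varphi\bullet B)_s$. We must check that conditioning on $\F^\mu_s$ instead of $\F_s$ changes nothing a.e. The sets of $\F^\mu_s$ differ from sets of $\F_s$ by null sets, so the set-integral characterization (\lean{ae\_eq\_condExp\_of\_forall\_setIntegral\_eq}) passes across. Since $X$ equals the $L^2$ process a.e.\ at each time, $X$ is an $(\F^\mu_t)$-martingale on $[0,T]$, and constant afterwards. A genuine martingale is a local martingale, via the localizing sequence $\tau_n\equiv\infty$ (or $\tau_n = n$). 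Concretely, we would either invoke a library lemma that a martingale is a local martingale, or discharge the stopped-process obligations directly. If stopping times are required to be adapted or hitting times, we can instead take $\tau_n=\inf\{t:|X_t|\ge n\}$, which is a stopping time for the continuous adapted $X$ with respect to the augmented filtration. The stopped processes are then bounded martingales by optional stopping.

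I expect the main obstacle to be the filtration bookkeeping in the adaptedness step. One must show that the \lean{limUnder} on the good set, combined with the indicator, is measurable for the augmented $\sigma$-algebra. This requires $G$ to be measurable, which we would write as a countable intersection over rational tolerances of uniform-Cauchy conditions, each checked at rational times by continuity. It also requires that $\F_s$-conditional expectations agree a.e.\ with $\F^\mu_s$-ones. I would attack the latter first via the general fact that enlarging a sub-$\sigma$-algebra by null sets leaves $\mu[\,\cdot\mid\cdot\,]$ unchanged up to $\mu$-a.e.\ equality.
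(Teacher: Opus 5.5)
Your proposal is correct and takes essentially the paper's route. You start from the a.e.-continuous modification of Theorem~\ref{thm:mod}, zero it off the null complement of the good set and clamp it at the horizon $T$ to get everywhere-continuous paths, pass to the null-augmented filtration, carry the martingale identity of Section~\ref{sec:process} across the augmentation by the invariance of conditional expectation under adjoining null sets (\lean{condExp\_sup\_nulls}), and conclude that the resulting true martingale is a local martingale. The only minor difference is that you obtain $\F^\mu_t$-adaptedness directly, as the pointwise limit of $\F_t$-measurable step integrals times $\mathbf{1}_G$, whereas the paper credits \lean{condExp\_sup\_nulls} with carrying adaptedness as well as the martingale identity from $(\F_t)$ to $(\F^\mu_t)$.
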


Two points fix the precise content. First, the underlying $L^2$ process is a
\emph{true} martingale (Section~\ref{sec:process}); what
Theorem~\ref{thm:locmart} adds is its everywhere-continuous representative,
in the form the local-martingale interface of the \lean{BrownianMotion}
package takes, whose hypothesis of paths continuous for \emph{every} $\omega$
is what forces an everywhere-defined representative. Second, the martingale
property survives the completion: a conditional expectation is unchanged when
the filtration is enlarged by null sets (\lean{condExp\_sup\_nulls}), and this
is what carries adaptedness and the martingale identity from $(\F_t)$ to
$(\F^{\mu}_t)$.

\subsection{The unbounded horizon}

The bounded-horizon representatives are consistent: for $m \le n$ the
horizon-$m$ and horizon-$n$ continuous processes agree, a.e.\ pathwise, on
$[0,m)$, because both are continuous and both modify the same $L^2$ process
there (\lean{indistinguishable\_of\_modification\_on}). Gluing them gives a
single process on the whole half-line, reading at time $t$ the horizon
$\lceil t\rceil + 1$, which lies strictly beyond $t$.

\begin{theorem}[{\lean{exists\_continuous\_localMartingale\_modification\_infinite}}]\label{thm:inf}
For every $\varphi$ there is a process $X : \R_{\ge 0}\times\Omega \to \R$ with
everywhere-continuous paths such that $X_t = (\varphi\bullet B)_t$ in
$L^2(\mu)$ for \emph{every} $t$, and $X$ is a local martingale for the
null-augmented filtration $(\F^{\mu}_t)$.
\end{theorem}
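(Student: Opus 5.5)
The plan is to glue the bounded-horizon representatives of Theorem~\ref{thm:locmart}. First I have to make sense of $(\varphi\bullet B)_t$ for all $t$. I take $\varphi$ to be a predictable integrand on $(0,\infty)\times\Omega$ whose restriction $\varphi_n$ to $(0,n]$ lies in $H_n$ for every $n\in\mathbb{N}$, and set $(\varphi\bullet B)_t := (\varphi_n\bullet B)_t$ for any $n \ge t$.

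This is well defined once I check consistency: for $t\le m\le n$, $(\varphi_m\bullet B)_t = (\varphi_n\bullet B)_t$ in $L^2(\mu)$. I prove this with the equalizer pattern of Section~\ref{sec:keyidentity}. Restriction $H_n\to H_m$ is continuous linear and sends simple processes to simple processes. On simple processes the partial sums up to time $t$ visibly coincide. Both sides are then continuous linear maps agreeing on a dense subspace, so \lean{DenseRange.equalizer} concludes.

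Next, for each $n$, Theorem~\ref{thm:locmart} gives an everywhere-continuous $X^n$ that is an $(\F^\mu_t)$-local martingale and modifies $(\varphi_n\bullet B)_t$ for $t\le n$. For $m\le n$, the processes $X^m$ and $X^n$ are continuous modifications of the same $L^2$ process on $[0,m]$. By \lean{indistinguishable\_of\_modification\_on}, they agree on $[0,m]$ outside a null set $N_{m,n}$; the argument is agreement on rationals plus continuity. Let $N=\bigcup_{m\le n}N_{m,n}$, a countable union of null sets, hence null. I define $X_t(\omega)=X^{\lceil t\rceil+1}_t(\omega)$ for $\omega\notin N$ and $X_t(\omega)=0$ on $N$.

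Off $N$, all the $X^n$ agree on overlaps, so $X$ coincides with $X^{k}$ on $[0,k-1]$ for every $k$. Each path is therefore continuous, and continuity is trivial on $N$. The modification property at $t$ follows from the one for $X^{\lceil t\rceil+1}$, since $N$ is null. Adaptedness to $\F^\mu_t$ survives because $N\in\mathcal N\subseteq\F^\mu_0$, so redefining on $N$ costs nothing. This is exactly where the null-augmentation is used.

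The main obstacle is the local-martingale property of the glued process. I would localize by $\tau_k = k$, or by $\min(k,\sigma_k)$ where $\sigma_k$ is a localizing time for $X^{k+1}$ inside the \lean{BrownianMotion} interface. Off $N$, the stopped process $X^{\tau_k}$ equals $(X^{k+1})^{\tau_k}$, because $\tau_k\le k$ keeps us inside the agreement window. A process indistinguishable from an $\F^\mu$-martingale is again one, using \lean{condExp\_sup\_nulls}-type invariance, and $\tau_k\to\infty$.

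The delicate point is purely interface-level. I must match the package's definition of local martingale, including its stopped-process construction and its everywhere-continuity requirement, and transfer the property across indistinguishability. If that transfer lemma is unavailable, I would first try the simpler route of deterministic $\tau_k=k$. There I would show directly that $X_{t\wedge k}$ is a true martingale: it equals the $L^2$ martingale $(\varphi_{k}\bullet B)_{t\wedge k}$ a.e. at each time, by Section~\ref{sec:process}, and conditional expectations are insensitive to null modifications.
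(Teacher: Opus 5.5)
Your proposal is correct and follows essentially the paper's route. You glue the bounded-horizon everywhere-continuous representatives at horizon $\lceil t\rceil+1$ via \lean{indistinguishable\_of\_modification\_on} on overlaps, absorb the exceptional null set into $\F^\mu_0$, and obtain the martingale property from the $L^2$ martingale identity transferred to $(\F^\mu_t)$ by \lean{condExp\_sup\_nulls}. The paper needs no localization: taking $k\ge t$ in your deterministic-time fallback already shows the glued $X$ is a true $(\F^\mu_t)$-martingale on all of $\R_{\ge 0}$. So you can drop the detour through inherited localizing times $\min(k,\sigma_k)$, where you would also have to arrange $\tau_k\to\infty$ across different localizing sequences.
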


Here there is no horizon clamp: the martingale property holds globally on
$\R_{\ge 0}$, supplied by the unbounded-horizon martingale identity through
\lean{condExp\_sup\_nulls}. This is the It\^o integral in the form the
classical theory gives it, a continuous local martingale on the half-line.

\section{Scope: what is, and is not, an It\^o calculus here}\label{sec:scope}

The title claims ``a'' It\^o calculus, and the indefinite article is doing
deliberate work. What earns the noun is that the canonical objects and the
canonical theorem are present and \emph{compose}: the stochastic term in
It\^o's formula (Theorem~\ref{thm:ito}) is definitionally the isometric
integral of Section~\ref{sec:integral}, whose energy law is the isometry of
Theorem~\ref{thm:isometry}, whose conditional projection is the martingale of
Theorem~\ref{thm:key}, whose engine is the quadratic-variation limit, and
whose everywhere-continuous representative is the local martingale of
Section~\ref{sec:pathwise}. One object, a single chain of theorems, no
re-derivation gluing them. That coherence is the sense in which this is a
calculus and not a collection of stochastic-analysis lemmas.

What the indefinite article concedes is equally concrete.

\paragraph{Bounded derivatives exclude the textbook's first examples.}
Theorem~\ref{thm:ito} requires $|f'| \le C_1$. The function $f(x) = x^2$ has
$f'(x) = 2x$, unbounded, and so is \emph{not} in the admissible class. In
particular the formula cannot, by itself, derive the keystone $\ItoI_T(B) =
\tfrac12(B_T^2 - B_0^2 - T)$, which is exactly why
Section~\ref{sec:integral} proves that identity directly, through the
discrete squaring route, rather than as an instance of It\^o's formula. The
honest reading is that what is formalized is the bounded-derivative It\^o
formula. The unrestricted $C^2$ statement is obtained, classically, by
localization (stopping the process at exit times of compact sets, applying
the bounded case, and passing to the limit), and so lies one standard
construction beyond the present development, atop the stopping-time layer
already available upstream.

\paragraph{Brownian integrator, $L^2$ integrands.} The integrator is Brownian
motion, not a general continuous local martingale or semimartingale, and
integrands live in the $L^2$ space $H_T$, not the localized class with
$\int_0^T \varphi_s^2\,ds < \infty$ only almost surely. Each is a standard way
to \emph{introduce} the It\^o integral (it is how the textbooks' first
chapters are scoped), and each is named here so that the verified frontier is
unambiguous. The $L^2$ integral and It\^o's formula are developed on a bounded
interval $[0,T]$; the pathwise local martingale of Section~\ref{sec:pathwise}
is the one object carried to the whole half-line.

\paragraph{Completeness, not the full usual conditions.} The pathwise local
martingale of Section~\ref{sec:pathwise} lives on the null-augmented
filtration: we add the $\mu$-null sets (completeness), which is what the
continuous modification needs, but not right-continuity, the other half of the
usual conditions. It is neither assumed nor used.

\paragraph{Quadratic variation in $L^2$, not pathwise.} Quadratic variation is
established in the $L^1$/$L^2$ sense, not in the pathwise (a.s.\ along refining
partitions) sense.

None of these boundaries qualifies a statement made in
Sections~\ref{sec:integral}--\ref{sec:pathwise}; they bound the \emph{reach}
of the development, not the \emph{validity} of any theorem in it.

\section{Related work}\label{sec:related}

Keskin's Isabelle/HOL martingale library~\cite{keskin2023} is the closest
prior art in the probabilistic direction; it develops (sub/super) martingales
in discrete and continuous time but stops below stochastic integration.
Degenne, Ledvinka, Marion, and Pfaffelhuber~\cite{brownianMotionRepo,degenne2025}
construct Brownian motion in Lean~4, building Kolmogorov extension, Gaussian
Banach-space measures, and Kolmogorov--Chentsov; that construction is the
foundation this work builds on, and the It\^o integral and its calculus are
the layer above it. The present development consumes their package and is, to
our knowledge (a search of the Isabelle AFP, the Lean ecosystem, and the Coq
ecosystem in June~2026), the first machine-checked It\^o calculus in any proof
assistant, and the first to carry the It\^o integral to a pathwise-continuous
local martingale. Avigad, H\"olzl, and Serafin's central limit
theorem~\cite{avigad2017} remains the historical anchor for deep
measure-theoretic probability in proof assistants. In the finance direction,
Echenim, Guiol, and Peltier formalized discrete-time replication pricing and
the Cox--Ross--Rubinstein model in Isabelle/HOL~\cite{echenim2020};
continuous-time mathematical finance above the present calculus is developed
in the parent library~\cite{flagship}. We are not aware of any other
formalization effort, in any assistant, that has published stochastic
integration, It\^o's formula, or their process-level martingale theory.

\section{Conclusion}\label{sec:conclusion}

The development formalizes the It\^o calculus of Brownian motion: on $[0,T]$,
the integral as an isometry, the integral as a continuous $L^2$ martingale
with its full second-moment structure, and It\^o's formula in both autonomous
and time-dependent forms, with the quadratic-variation machinery that drives
them; and then the pathwise object, an almost-surely continuous modification
whose everywhere-continuous representative is a local martingale on the whole
half-line. To our knowledge it is the first machine-checked It\^o calculus in
any proof assistant. It is also the layer the parent mathematical-finance
library~\cite{flagship} stands on, where the integral and the formula are
consumed to derive, rather than assume, the objects of continuous-time
pricing. The boundaries recorded in Section~\ref{sec:scope} (localization to
the unrestricted $C^2$ formula, integrators beyond Brownian motion, and
right-continuity of the filtration) mark where the formalized theory currently
ends; each sits on infrastructure that is the natural place for it to be
built.

\appendix
\section{Displayed results and their Lean names}\label{app:names}

\begin{table}[h]
\centering
\small
\resizebox{\textwidth}{!}{%
\begin{tabular}{@{}lll@{}}
\toprule
Result & Lean name & Module \\
\midrule
Integrand $\pi$-system & \lean{predictableRect} & \lean{ItoIntegralCLM} \\
\quad generator & \lean{generateFrom\_predictableRect} & \lean{ItoIntegralCLM} \\
Simple density & \lean{simpleAssembly\_T\_denseRange} & \lean{ItoIntegralCLM} \\
Elementary isometry & \lean{assembly\_isometry\_T} & \lean{ItoIntegralCLM} \\
The It\^o integral & \lean{itoIntegralCLM\_T} & \lean{ItoIntegralCLM} \\
Keystone $\int_0^T\! B\,dB$ & (\lean{ItoIntegralBrownian}) & \lean{ItoIntegralBrownian} \\
Cond.\ mart.\ difference & \lean{condExp\_adapted\_mul\_increment} & \lean{ItoIntegralProcessMartingale} \\
Key identity (Thm~\ref{thm:key}) & \lean{itoProcessCLM\_eq\_condExpL2} & \lean{ItoIntegralProcessGeneral} \\
Martingale property & \lean{itoIntegralProcessGen\_isMartingale} & \lean{ItoIntegralProcessGeneral} \\
$L^2$-continuity & \lean{itoIntegralProcessGen\_l2\_continuous} & \lean{ItoIntegralProcessGeneral} \\
Time-indexed isometry (Thm~\ref{thm:isometry}) & \lean{itoProcessCLM\_norm\_sq} & \lean{ItoIntegralProcessIsometry} \\
It\^o's formula (Thm~\ref{thm:ito}) & \lean{ito\_formula\_L2\_bddDeriv} & \lean{ItoFormulaCLM} \\
Time-dependent (Thm~\ref{thm:itoTD}) & \lean{ito\_formula\_td\_L2\_bddDeriv} & \lean{ItoFormulaTD} \\
Continuous modification (Thm~\ref{thm:mod}) & \lean{exists\_continuous\_modification\_itoProcess} & \lean{ItoIntegralProcessContinuousModification} \\
Local martingale (Thm~\ref{thm:locmart}) & \lean{exists\_continuous\_localMartingale\_modification} & \lean{ItoIntegralProcessLocalMartingaleGeneral} \\
$[0,\infty)$ local mart.\ (Thm~\ref{thm:inf}) & \lean{exists\_continuous\_localMartingale\_modification\_infinite} & \lean{ItoIntegralProcessLocalMartingaleInfinite} \\
Quadratic variation & \lean{qv\_equals\_t} & \lean{BrownianQuadraticVariation} \\
Expectation It\^o & \lean{expectation\_ito} & \lean{FeynmanKacHeatEquation} \\
\bottomrule
\end{tabular}%
}
\caption{Every displayed result, by Lean declaration name and module.}
\label{tab:names}
\end{table}

\end{document}